\newtheorem{theorem}{\indent Theorem}
\newtheorem*{proof}{\indent Proof}
\newtheorem{remark}{\indent Remark}
\newtheorem{proposition}{\indent Proposition}
\newcommand{\Rmnum}[1]{\expandafter\@slowromancap\romannumeral #1@}
\begin{document}

\title{\huge Digitalizing Over-the-Air Computation via The Novel Complement Coded Modulation}

\author{Zhixu Wang, Jiacheng Yao, \IEEEmembership{Graduate Student Member, IEEE}, 
        Wei~Xu,~\IEEEmembership{Fellow, IEEE},\\
        Wei~Shi,~\IEEEmembership{Member, IEEE},
        and Kaibin~Huang,~\IEEEmembership{Fellow, IEEE}
\vspace{-0.35cm}
\thanks{Z. Wang, J. Yao, and W. Xu are with the National Mobile Communications Research Laboratory, Southeast University, Nanjing 210096, China (e-mail: \{zhixuwang, jcyao, wxu\}@seu.edu.cn). J. Yao, W. Xu, and W. Shi are with the Purple Mountain Laboratories, Nanjing 211111, China (e-mail: shiwei1@pmlabs.com.cn). K. Huang is with the Department of Electrical and Electronic Engineering, The University of Hong Kong, Hong Kong SAR, China (e-mail: huangkb@hku.hk).}
\vspace{-0.35cm}
}



\maketitle

\begin{abstract}
To overcome inherent limitations of analog signals in over-the-air computation (AirComp), this letter proposes a two’s complement-based coding scheme for the AirComp implementation with compatible digital modulations. Specifically, quantized discrete values are encoded into binary sequences using the two’s complement and transmitted over multiple subcarriers. At the receiver, we design a decoder that constructs a functional mapping between the superimposed digital modulation signals and the target of computational results, theoretically ensuring asymptotic error-free computation with the minimal codeword length. To further mitigate the adverse effects of channel fading, we adopt a truncated inversion strategy for pre-processing. Benefiting from the unified symbol distribution after the proposed encoding, we derive the optimal linear minimum mean squared error (LMMSE) detector in closed form and propose a low-complexity algorithm seeking for the optimal truncation selection. Furthermore, the inherent importance differences among the coded outputs motivate an uneven power allocation strategy across subcarriers to improve computational accuracy. Numerical results validate the superiority of the proposed scheme over existing digital AirComp approaches, especially at low signal-to-noise ratio (SNR) regimes.
\end{abstract}

\begin{IEEEkeywords}
Digital modulation, the two's complement, over-the-air computation.
\end{IEEEkeywords}

\vspace{-0.4cm}
\section{Introduction}
\IEEEPARstart{T}{he} next-generation wireless communication network is expected to deeply integrate computational capabilities to support emerging intelligent applications \cite{ref1,zhyang}. 
Along this line, task-oriented integrated sensing, communication, and computation (ISCC) has been recognized as a promising paradigm for low-latency edge inference, where energy-efficient designs via joint optimization across communication and computation modules have been actively studied \cite{Yao2025ISCC}. 
In such collaborative edge-intelligent systems, fast and scalable aggregation over massive devices is often required. 
Over-the-air computation (AirComp) has attracted considerable attention as it unifies communication and computation. Specifically, it exploits the superposition property of multiple access channels (MACs) to achieve simultaneous aggregation of uncoded analog signals, enabling automatic computation over the air \cite{gxzhu,survey}. By reusing the limited spectrum among multiple devices, AirComp improves spectral efficiency and reduces computational latency \cite{ref2}. Despite these advantages, its analog nature presents several challenges. Firstly, reliance on analog processing leads to compatibility issues with modern digital communication systems and incurring additional hardware costs \cite{ref3}. Secondly, analog signals are highly susceptible to channel fading, interference, and noise, reducing computational robustness \cite{ref4}. Hence, integrating digital signals into AirComp is crucial to enhance both compatibility to commercial communication systems and computational robustness.

The main challenge in digital AirComp is establishing a unique mapping between the superimposed digital signals and the target computational outcomes. One approach is to partition the received signal space and assign each partition to a corresponding result. For example, the authors of \cite{ref5}~proposed a binary phase shift keying (BPSK)-based one-bit digital aggregation scheme for majority voting, where the voting result was determined by the sign of the received signal. This was later extended in \cite{ref6} using FSK modulation to improve voting accuracy. For more general functions, the authors of \cite{ref7} proposed ChannelComp, which optimizes the digital modulation constellation to establish the unique mapping. However, as the number of devices or the modulation order increases, the number of possible outcomes grows exponentially, resulting in a severe increase in detection complexity at the receiver.

To avoid the complexity boosting at receivers, an alternative is to employ relatively simple digital modulations while leveraging coding technique to establish a functional relationship between the superimposed signals and the computational results. For example, a modulation on conjugate-reciprocal zeros (MOCZ)-based digital AirComp method was proposed in \cite{ref8} for majority voting. Moreover, the author in \cite{ref10} used the balanced number system and designed a corresponding encoder and decoder for the digital AirComp. Although easy to deploy, the coding-based approach maps a single continuous value into a sequence of symbols with excessive codeword length, thereby reducing the computational efficiency of AirComp.

In computer systems, signed numbers are usually represented using the two's complement, enabling unified arithmetic operations via addition \cite{code}, which naturally aligns with the superposition over MACs.  Motivated by this, we adopt the two's complement for coding, thereby achieving error-free computation with the provable minimal codeword length. Although  binary representation based coding has been discussed in \cite{binary}, its extension and dedicated design for signed numbers remain unexplored. Moreover, coding alone is insufficient to combat dynamic wireless environments. To enhance robustness, transceiver optimization and resource allocation remain essential yet challenging. In particular, maximum likelihood (ML) detection is no longer optimal from a computational perspective, necessitating alternative detectors. To this end, we highlight two key properties of the proposed coding scheme: 1) unifying data sources into a standard Bernoulli distribution, which facilitates the design of optimal truncation and linear minimum mean squared error (LMMSE) detector;  2) introducing importance differences among bits, which enables an uneven power allocation to enhance computational accuracy under stringent resource constraints. Finally, the scheme is extended to multiple-input multiple-output (MIMO) settings.

\begin{figure*}[!t]
  \centering
  \centerline{\includegraphics[width=5.8in]{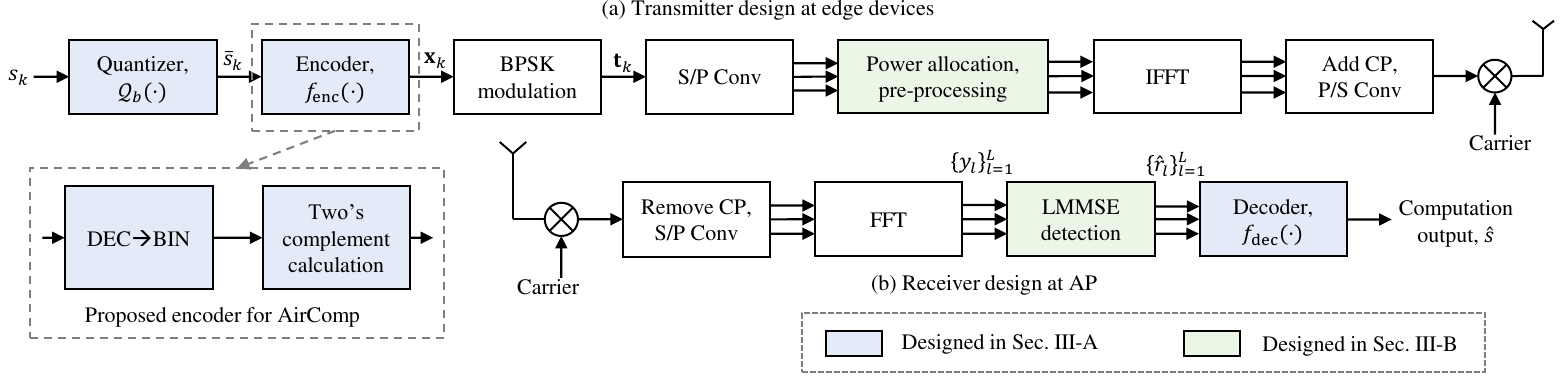}}
  \caption{Transceiver design of the proposed complement coded digital AirComp.}\label{fig1}
  \vspace{-0.4cm}
\end{figure*}

\vspace{-0.3cm}
\section{System Model}
We consider a single-input single-output (SISO) AirComp system comprising an access point (AP) and $K$ edge devices. Let $s_k\in \mathbb{R}$ denote the data generated by device $k$, which is drawn from a given distribution. The AP aims to compute a target function of $\{s_k\}_{k=1}^K$. Without loss of generality, we focus on the sum $s=\sum_{k=1}^K s_k$. To achieve this, we employ digital AirComp, whose common process is illustrated in Fig.~\ref{fig1} and outlined as follows.

At the transmitter, to enable digital AirComp, each continuous value $s_k$ is first quantized using a $b$-bit quantizer $\mathcal{Q}_b(\cdot)$, yielding $\bar{s}_k=\mathcal{Q}_b(s_k)$. The quantized value $\bar{s}_k$ is further encoded into a sequence with $L$ binary elements, and then aggregated over the MACs after digital modulation. Let $f_{\text{enc}}(\cdot)$ denote the encoder with codeword length $L$, i.e., 
\begin{align}
    f_{\text{enc}}(\bar{s}_k)\!=\!\mathbf{x}_k\!=\!(x_{k,1},x_{k,2},\cdots,x_{k,L}),\, x_{k,l}\in \{0,1\}.
\end{align}
Then, we adopt BPSK for digital modulation and hence each coded bit $x_{k,l}$ is mapped to the symbol $t_{k,l}=2x_{k,l}-1$. For the simultaneous transmission of $L$ symbols, orthogonal frequency division multiplexing (OFDM) modulation is adopted to divide the available spectrum into $L$ orthogonal subchannels\footnote{Standard OFDM techniques can be employed to achieve synchronization across multiple devices.}. All devices transmit their $L$ symbols in parallel over these~subchannels. The received signal on the $l$-th subcarrier is given~by
\begin{align}\label{eq3}
    y_l=\sum_{k=1}^{K} h_{k,l}\rho_{k,l}t_{k,l}+n_l,
\end{align}
where $h_{k,l}$ denotes the channel between device $k$ and the AP, $\rho_{k,l}$ denotes the pre-processing factor of device $k$ on subcarrier $l$, and $n_l\sim \mathcal{CN}(0,\sigma_{l}^{2})$ is the additive Gaussian noise with power $\sigma_l^2$. For simplicity, large-scale fading is considered being compensated by power control.

At the receiver, the AP first detects the desired superposed outcome on each subcarrier, i.e., $r_l=\sum_{k=1}^K x_{k,l}$, from $y_l$, yielding the estimate $\hat{r}_l$. Based on $\{\hat{r}_l\}_{l=1}^L$, we estimate $s$ by
\begin{align}
    \hat{s}=f_{\text{dec}}(\hat{r}_1,\hat{r}_2,\cdots,\hat{r}_L),
\end{align}
where $f_{\text{dec}}(\cdot)$ is a well-designed decoder satisfying 
\begin{align}\label{constraint}
    f_{\text{dec}}\left(\sum_{k=1}^K f_{\text{enc}}(\bar{s}_k)\right) =\sum_{k=1}^K \bar{s}_k.
\end{align}
Constraint (\ref{constraint}) ensures that, the receiver can recover the sum of the quantized values without error in an ideal case.

We note that the challenge for implementing the coded digital AirComp lies on the joint design of the encoder, $f_{\text{enc}}(\cdot)$, and decoder, $f_{\text{dec}}(\cdot)$, to meet the constraint in (\ref{constraint}), while minimizing the codeword length $L$ to improve spectral efficiency. Moreover, to enhance computational accuracy under channels fadings, further optimization of transceiver and resource allocation is required, though all remain highly coupled. We focus on addressing the above issues in the next section.

\section{Proposed Complement Coded Digital AirComp}
In this section, we first present the coding design for~digital AirComp and then optimize the pre-processing, power allocation, and detection to mitigate the effects of channel fading and noise. The corresponding modules are illustrated in Fig.~\ref{fig1}.

\subsection{Encoding and Decoding Design}
Inspired by the properties of the two's complement representation, we construct the encoder and decoder as follows:
\begin{itemize}
\item Encoder: To guarantee that the quantized discrete values fall within the range of the two’s complement representation, we first specify the following quantizer
\begin{align}\label{quantizer}
\bar{s}_k = \mathcal{Q}_b(s_k) \triangleq \frac{1}{\zeta} \lfloor \zeta s_k \rfloor,
\end{align}
where $\zeta$ is set as $\frac{2^{b-1}}{\max\{s_k\} + \varepsilon}$ and $\varepsilon$ is a small positive constant satisfying $\varepsilon \ll 2^{1-b}$. After scaling $\bar{s}_k$ by $\zeta$, we obtain an integer in the set $\{-2^{b-1}, -2^{b-1}\!+\!1, \cdots, 0, 1, \cdots, 2^{b-1}\!-\!1\}$. The encoder converts the signed decimal integer $\zeta \bar{s}_k$ into its two's complement sequence $\{x_{k,b}, \ldots, x_{k,1}\}$, which is the encoded output.
\item Decoder: The decoder $f_{\text{dec}}(\cdot)$ is defined as
\begin{align}\label{encoder}
    f_{\text{dec}}(r_1,r_2,\!\cdots\!,r_L)\!=\!\frac{1}{\zeta}\!\left( \sum_{l=1}^{L-1} r_l 2^{l-1}\!-\!r_L 2^{L-1}\right).
\end{align}
\end{itemize}

\begin{theorem} \label{theo1}
The proposed encoder and decoder satisfy the constraint in (\ref{constraint}) while achieving the minimal codeword length.
\end{theorem}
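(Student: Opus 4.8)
The plan is to verify the two assertions separately: that the proposed pair $(f_{\text{enc}},f_{\text{dec}})$ meets the consistency constraint (\ref{constraint}), and that the codeword length it uses, namely $L=b$, is the smallest possible.

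The first assertion reduces to a short algebraic verification based on the value identity of the two's complement. Put $m_k\triangleq\zeta\bar{s}_k$, which by the design of $\mathcal{Q}_b$ is an integer in $\{-2^{b-1},\ldots,2^{b-1}-1\}$, and let $(x_{k,1},\ldots,x_{k,b})$ be its $b$-bit two's complement codeword (least significant bit first). Then $m_k=-x_{k,b}2^{b-1}+\sum_{l=1}^{b-1}x_{k,l}2^{l-1}$. Summing this over $k=1,\ldots,K$ and exchanging the two finite sums gives $\zeta\sum_{k=1}^{K}\bar{s}_k=-\big(\textstyle\sum_{k}x_{k,b}\big)2^{b-1}+\sum_{l=1}^{b-1}\big(\textstyle\sum_{k}x_{k,l}\big)2^{l-1}$. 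Since $\sum_{k}x_{k,l}=r_l$ is precisely the $l$-th coordinate of the superimposed codeword $\sum_{k}f_{\text{enc}}(\bar{s}_k)$, the right-hand side is exactly $\zeta\, f_{\text{dec}}(r_1,\ldots,r_b)$ by the definition (\ref{encoder}); dividing by $\zeta$ yields (\ref{constraint}). Nothing beyond the two's complement identity and linearity of summation is used, so this step needs no further computation.

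For minimality I would prove the matching lower bound $L\ge b$ by a counting argument. If $f_{\text{enc}}$ sent two distinct quantized values to the same length-$L$ codeword, then two data profiles differing only in one device's input would yield the same superimposed codeword, hence the same decoder output, but different true sums, contradicting (\ref{constraint}); therefore $f_{\text{enc}}$ must be injective on the quantization alphabet. That alphabet has $2^{b}$ elements --- the quantizer is designed so that $\zeta\bar{s}_k$ sweeps all $2^{b}$ integers of $\{-2^{b-1},\ldots,2^{b-1}-1\}$ --- so an injection into $\{0,1\}^{L}$ requires $2^{L}\ge 2^{b}$, i.e.\ $L\ge b$. Because the proposed scheme attains $L=b$, its codeword length is minimal.

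I expect the algebra of the first part to be entirely routine; the delicate point is phrasing the minimality claim correctly. One must be explicit about the quantification (all devices share the common encoder $f_{\text{enc}}$, while $f_{\text{dec}}$ may be arbitrary) and, more importantly, recognize that the quantity controlling $L$ is the number $2^{b}$ of distinct \emph{per-device} source values, not the $\Theta(K2^{b})$ distinct values the sum $\sum_k\bar{s}_k$ can attain; the latter does not force a longer codeword because the superimposed symbols $r_l$ are non-binary. Establishing that per-device injectivity is exactly what dictates $L$ --- and that it is achieved with equality --- is the crux of the minimality half.
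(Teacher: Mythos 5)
Your proposal is correct and follows essentially the same route as the paper: the consistency part via the two's complement value identity and linearity of the superposition, and the minimality part via a pigeonhole argument on the $2^b$ quantized values versus the $2^L$ codewords. Your version is in fact slightly more complete on the minimality half, since you explicitly construct two data profiles differing in one device's input that become indistinguishable after superposition, a step the paper's appendix leaves implicit.
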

\begin{proof}
    Please refer to Appendix \ref{appa}. \hfill $\square$
\end{proof}

\begin{remark}
    \emph{Theorem \ref{theo1}} shows that the proposed encoder and decoder establish a deterministic mapping using the minimal codeword length $L=b$. This substantially reduces resource overhead compared with \cite{ref10}, where the codeword length is much larger, i.e., $L\geq (\beta-1)\log_\beta2^b$ for any odd $\beta\geq 3$. Although binary representation based scheme \cite{binary} has the~same length, the proposed method natively supports signed number computations, making it more suitable for practical tasks.
\end{remark}

Besides, the proposed coding scheme exhibits two~advantageous properties not explicitly reported in existing designs. They provide a solid foundation for further optimizations.

\emph{Property 1}: The output coded signal $x_{k,l}$ follows a standard Bernoulli distribution under the assumption that the source $s_k$ is symmetrically distributed. Hence, the proposed encoder transforms arbitrary data sources into a standard Bernoulli distribution, removing the need for prior knowledge of the source distribution at the receiver.

\emph{Property 2}: As observed in (\ref{encoder}), the aggregation results from different subcarriers, $r_l$, are weighted differently, resulting in varying importance. This motivates the implementation of uneven power allocation. In extreme cases, lower-priority computations can be disregarded to prioritize the accuracy of higher-priority ones, thereby improving the robustness.

\subsection{Pre-processing, Power Allocation, and Detection Design}
Although error-free computation is theoretically achievable, practical over-the-air aggregation suffers from channel fading and additive noise, necessitating further optimization of transceiver and resource allocation. Due to the difficulty in directly deriving explicit objectives and the coupling among optimization variables, solving the resulting complicated problem becomes intractable. To this end, we adopt a decomposition strategy that enables individual yet effective optimization.

Specifically, we first adopt a typical truncated channel inversion scheme \cite{gxzhu}, which shares a similar form with the optimal one. The pre-processing factor for device $k$ on subcarrier $l$ is
\begin{equation}\label{eq7}
    \rho_{k,l} = 
    \begin{cases}
       \frac{{\sqrt{p_l}} h_{k,l}^{*}}{|h_{k,l}|^2},  & k\in \mathcal{K}_l, \\
       0, & k\notin \mathcal{K}_l,
    \end{cases}
\end{equation}
where $\mathcal{K}_l$ is the set of activated devices on the $l$-th subcarrier, and $p_l$ is a scaling factor for ensuring power constraint. 
Then, the scaling factor $p_l$ satisfies
\begin{align}\label{scaling}
    \frac{p_l}{|h_{k,l}|^2} \mathbb{I}_{k,l}\leq P_{k,l},\, \forall k,l,
\end{align}
where $\mathbb{I}_{k,l}=1$ only if $k\in \mathcal{K}_l$ and otherwise $\mathbb{I}_{k,l}=0$, $P_{k,l}$ denotes the allocated transmit power budget of device $k$ on subcarrier $l$. Besides, $\sum_{l=1}^L P_{k,l}=P_{k,\max}$ and $P_{k,\max}$ denotes the total transmit power budget of device $k$.  Considering \emph{Property 2}, uniform power allocation across subcarriers inevitably results in performance degradation. Inspired by the mathematical formalism in (\ref{encoder}), we propose a heuristic strategy that prioritizes high-criticality computations\footnote{We note that the coded output bits are not strictly independent; instead, they exhibit source-dependent correlations. These unknown and distribution-dependent correlations render the power allocation across subcarriers highly coupled and difficult to optimize in a explicit way. To this end, we adopt a heuristic strategy that decouples the design across subcarriers, thereby enhancing applicability under diverse and unknown source distributions.}. Specifically, the power allocation follows $\frac{P_{k,l}}{P_{k,l-1}}=\varpi,\,\forall l=2,\cdots,L$, where $\varpi>1$ is a predetermined constant and $P_{k,1}$ is chosen to satisfy $P_{k,1}\leq \frac{1 - \varpi}{1-\varpi^{b}} P_{k,\max}$ thereby ensuring the maximum power constraint. Now, the allocated power follows a geometric sequence, consistent with the weight growth of $\{r_l\}_{l=1}^L$ in (\ref{encoder}). Moreover, the transmit power of truncated devices is set to zero, which inevitably causes power wastage. It can be alleviated by performing power allocation again after the set of active devices is determined.

With the given pre-processing strategy, we are available to derive the LMMSE detector in the following proposition.
\begin{proposition} \label{pro1}
    The LMMSE detector on subcarrier~$l$~is
    \begin{align}\label{eq8}
        \hat{r}_l=\lambda_l \Re\{y_l\} + \mu_l,
    \end{align}
    where $\lambda_l=  \frac{\sqrt{p_l} |\mathcal{K}_l|}{2p_l |\mathcal{K}_l| + \sigma_l^2}$ and $\mu_l= \frac{K}{2}$.
\end{proposition}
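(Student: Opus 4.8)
The plan is to first substitute the truncated channel--inversion pre-processing of $(\ref{eq7})$ into the received signal $(\ref{eq3})$. For every active device the effective channel collapses to the real scalar $h_{k,l}\rho_{k,l}=\sqrt{p_l}$, so $y_l=\sqrt{p_l}\sum_{k\in\mathcal{K}_l}t_{k,l}+n_l$ with $t_{k,l}=2x_{k,l}-1\in\{-1,+1\}$. Since the useful signal is real-valued while $n_l\sim\mathcal{CN}(0,\sigma_l^2)$ is circularly symmetric, $\Im\{y_l\}=\Im\{n_l\}$ is statistically independent of $r_l$ and of $\Re\{y_l\}$; hence $z_l\triangleq\Re\{y_l\}=\sqrt{p_l}\sum_{k\in\mathcal{K}_l}t_{k,l}+\Re\{n_l\}$, with $\Re\{n_l\}\sim\mathcal{N}(0,\sigma_l^2/2)$, is a sufficient statistic for $r_l$. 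Consequently the LMMSE estimator based on the full complex $y_l$ coincides with the best affine estimator of $r_l$ from the real observation $z_l$, and it remains only to evaluate its slope and offset from the first- and second-order statistics of $(r_l,z_l)$.

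Next I would invoke \emph{Property~1}: under a symmetric source each coded bit satisfies $x_{k,l}\sim\mathrm{Bernoulli}(1/2)$, and the bits of distinct devices on a fixed subcarrier are independent, so $\mathbb{E}[x_{k,l}]=\tfrac12$, $\mathrm{Var}(x_{k,l})=\tfrac14$, $\mathbb{E}[t_{k,l}]=0$, and $\mathrm{Var}(t_{k,l})=1$. From these I obtain $\mathbb{E}[z_l]=0$; $\mathbb{E}[r_l]=\sum_{k=1}^K\mathbb{E}[x_{k,l}]=K/2$; $\mathrm{Var}(z_l)=p_l|\mathcal{K}_l|+\sigma_l^2/2$; and, since the only surviving terms are the self-covariances of the active devices, $\mathrm{Cov}(r_l,z_l)=\sqrt{p_l}\sum_{k\in\mathcal{K}_l}\mathrm{Cov}\!\left(x_{k,l},2x_{k,l}-1\right)=2\sqrt{p_l}\sum_{k\in\mathcal{K}_l}\mathrm{Var}(x_{k,l})=\sqrt{p_l}|\mathcal{K}_l|/2$.

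Then I would plug these moments into the standard scalar affine-MMSE formula $\hat{r}_l=\mathbb{E}[r_l]+\frac{\mathrm{Cov}(r_l,z_l)}{\mathrm{Var}(z_l)}\left(z_l-\mathbb{E}[z_l]\right)$. This gives $\lambda_l=\frac{\mathrm{Cov}(r_l,z_l)}{\mathrm{Var}(z_l)}=\frac{\sqrt{p_l}|\mathcal{K}_l|/2}{p_l|\mathcal{K}_l|+\sigma_l^2/2}=\frac{\sqrt{p_l}|\mathcal{K}_l|}{2p_l|\mathcal{K}_l|+\sigma_l^2}$ and, using $\mathbb{E}[z_l]=0$, $\mu_l=\mathbb{E}[r_l]=K/2$, which is exactly the claimed detector in $(\ref{eq8})$. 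I would add a one-line remark explaining why $|\mathcal{K}_l|$ appears in $\lambda_l$ while the full $K$ appears in $\mu_l$: the estimand $r_l=\sum_{k=1}^{K}x_{k,l}$ sums over \emph{all} devices, but the truncated ones are decorrelated from the observation and therefore enter only through the mean correction, not through the regression slope.

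The computation above is routine second-moment bookkeeping; the parts that actually need care—and the main obstacle—are the two modeling justifications: (i) showing rigorously that $\Re\{y_l\}$ loses no information (the orthogonality/sufficiency argument that exploits the independence and circular symmetry of the noise, so that the coefficient on $\Im\{y_l\}$ in the affine estimator is zero), and (ii) invoking independence of the coded bits \emph{across devices} on a single subcarrier, which is what eliminates all cross terms in $\mathrm{Cov}(r_l,z_l)$ and $\mathrm{Var}(z_l)$; note that the within-codeword correlations mentioned in the power-allocation discussion are irrelevant here because each subcarrier is detected in isolation. If the residual MMSE were also wanted it would follow immediately as $\mathrm{Var}(r_l)-\lambda_l^2\,\mathrm{Var}(z_l)$, but this is not required for the stated proposition.
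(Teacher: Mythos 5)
Your proposal is correct and follows essentially the same route as the paper: both reduce the problem to the first- and second-order moments of $(r_l,\Re\{y_l\})$ under the Bernoulli$(1/2)$ coded bits and truncated channel inversion, the only difference being that you read off $\lambda_l,\mu_l$ from the standard affine-MMSE formula while the paper expands $\mathbb{E}[(r_l-\hat{r}_l)^2]$ and sets its derivatives to zero. Your added justification that $\Im\{y_l\}$ carries no information (so the restriction to $\Re\{y_l\}$ is lossless) is a point the paper leaves implicit.
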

\begin{proof}
    Please refer to Appendix \ref{appb}. \hfill $\square$
\end{proof}

\begin{remark}
    With low  SNR, i.e., $\frac{p_l}{\sigma_l^2} \!\to \!0$, we have $\lambda_l =0$, implying that the receiver estimates the computation result using the mean value and thereby reducing excessive computational errors. Hence, unlike pure communication systems, the LMMSE detector is optimal from a computational perspective.
\end{remark}

Now, with the LMMSE receiver in (\ref{eq8}), the minimum achievable MSE for $r_l$ is expressed as follows
\begin{align} \label{eq13}
    e_l= \frac{2p_l |\mathcal{K}_l|\left(K-|\mathcal{K}_l|\right)+K\sigma_l^2}{8p_l|\mathcal{K}_l|+4\sigma_l^2}.
\end{align}
Note that $e_l$ depends on the scaling factor and the truncation selection, which motivates us to further optimize them for minimizing MSE.  
The corresponding optimization problem is\footnote{{By solving problem (\ref{problem}), the AP feeds the optimal $\mathcal{K}_l$ back to each device. The additional overhead is a $L$-bit vector and thus negligible in practice.}}
\begin{align}\label{problem}
    \mathop{\text{minimize}}_{p_l,\mathcal{K}_l} \enspace &e_l\nonumber \\
    \text{subject to}\enspace &\frac{p_l}{|h_{k,l}|^2}\!\leq \!P_{k,l},\, \forall k\!\in \!\mathcal{K}_l,\,\mathcal{K}_l \!\subset\! \{1,2,\cdots,K\}.
\end{align}
We notice that $e_l$ monotonically decreases as $p_l$ increases. Hence, with any given $\mathcal{K}_l$, the optimal scaling factor equals
\begin{align}\label{optp}
    p_l^{\text{opt}}=\min_{k\in \mathcal{K}_i}\{ |h_{k,l}|^2 P_{k,l}\}.
\end{align}
Then, to optimize $\mathcal{K}_l$, we propose Algorithm \ref{alg0}, whose computational complexity is $\mathcal{O}(K\mathrm{log}K)$ due to the sorting operation. Regarding the optimality, we derive the following theorem.
\begin{theorem}\label{theo2}
    The solution obtained by Algorithm 1 is the optimal solution to problem (\ref{problem}).
\end{theorem}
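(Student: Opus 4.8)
The plan is to characterize the objective $e_l$ in (\ref{eq13}) as a function of the active set $\mathcal{K}_l$ once the optimal scaling factor (\ref{optp}) has been substituted, and then show that Algorithm~1 explores exactly the family of candidate sets that can contain the optimum. First I would fix the cardinality $|\mathcal{K}_l| = m$ and ask: among all subsets of size $m$, which one minimizes $e_l$? Substituting $p_l^{\text{opt}} = \min_{k\in\mathcal{K}_l}\{|h_{k,l}|^2 P_{k,l}\}$ into (\ref{eq13}) and noting (via the remark that $e_l$ is monotonically decreasing in $p_l$) that $e_l$ decreases as $p_l^{\text{opt}}$ grows, the best size-$m$ set is the one that maximizes $\min_{k\in\mathcal{K}_l}\{|h_{k,l}|^2 P_{k,l}\}$ — i.e., the $m$ devices with the largest values of the per-device metric $g_{k,l} \triangleq |h_{k,l}|^2 P_{k,l}$. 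This immediately implies that the optimal $\mathcal{K}_l$ is a \emph{prefix} of the device list sorted in decreasing order of $g_{k,l}$, which is precisely the list Algorithm~1 sorts and sweeps over. Hence the search reduces from $2^K$ subsets to the $K$ nested candidates $\{1\}, \{1,2\}, \ldots, \{1,\ldots,K\}$ (in sorted indexing), and Algorithm~1, by evaluating $e_l$ on each prefix and keeping the minimizer, returns the global optimum.

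The second ingredient is to pin down how $e_l$ behaves along this one-parameter family of prefixes, so that the algorithm's stopping/selection rule is justified. Writing $p_l^{(m)}$ for the scaling factor induced by the size-$m$ prefix (a non-increasing sequence in $m$, since adding devices can only lower the min) and substituting into (\ref{eq13}), I would express $e_l^{(m)}$ explicitly and examine its discrete behavior: enlarging $\mathcal{K}_l$ helps through the noise-averaging term but hurts through both the reduced $p_l$ and the $(K-|\mathcal{K}_l|)$ signal-misalignment term. The key point is that we do not need $e_l^{(m)}$ to be unimodal — Algorithm~1 evaluates all $K$ prefixes and takes the best, so establishing that the optimum lies among prefixes (Step~1) is logically sufficient. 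If the algorithm instead uses an early-termination rule, I would additionally show $e_l^{(m)}$ has a single "valley" so that a local comparison certifies global optimality; this would come from analyzing the sign of the forward difference $e_l^{(m+1)} - e_l^{(m)}$.

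The main obstacle I anticipate is the coupling between the set and the scaling factor: the metric $g_{k,l}$ that dictates the sorting order is the \emph{same} quantity whose minimum over the set equals $p_l^{\text{opt}}$, so one must carefully argue that replacing a chosen device by one with larger $g_{k,l}$ never decreases $p_l^{\text{opt}}$ (true, since the min over a set dominated elementwise by another set is no smaller) while simultaneously not worsening the cardinality-dependent terms (automatic, since cardinality is held fixed during the swap). Making this exchange argument airtight — i.e., that any non-prefix set is weakly dominated by the prefix of the same size — is the crux; once it is in place, the optimality of Algorithm~1 follows by finite enumeration over the $K$ prefixes. A minor secondary point is handling ties in $g_{k,l}$ and the feasibility constraint $\mathcal{K}_l \neq \emptyset$, both of which are routine.
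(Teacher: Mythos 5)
Your proposal is correct and follows essentially the same route as the paper's proof: an exchange argument showing that the optimal active set must be a prefix of the devices sorted by channel quality (so the $2^K$ subsets collapse to $K$ nested candidates), combined with the observation that Algorithm~1 enumerates all $K$ prefixes and the monotonic decrease of $e_l$ in $p_l$ at fixed cardinality. The only substantive difference is that you sort by the metric $|h_{k,l}|^2 P_{k,l}$ (which is what actually determines $p_l^{\text{opt}}$) rather than by $|h_{k,l}|$ alone as the paper does; the two coincide when the per-subcarrier power budgets are equal across devices, and your version is in fact the more robust statement of the argument.
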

\begin{proof}
    Please refer to Appendix \ref{appc}. \hfill $\square$
\end{proof}

\begin{algorithm}[!t]
\caption{Greedy Algorithm for Solving (\ref{problem})} \label{alg0}
\renewcommand{\algorithmicrequire}{\textbf{Input:}}
\renewcommand{\algorithmicensure}{\textbf{Ouput:}}
\begin{algorithmic}[1]  
\REQUIRE $\{h_{k,l}\}_{k=1}^K$ and $\sigma_l^2$ $\enspace$ \textbf{Output:} $\mathcal{K}_l$
\STATE Sort the $K$ devices by channel gain to satisfy $|h_{k_1,l}|\geq |h_{k_2,l}|\geq\cdots\geq |h_{k_K,l}|$.
\STATE Initialize $\text{tmp}=+\inf$ and $\mathcal{C}_l=\emptyset$.
\FOR{$n=1:K$}
\STATE Update  $\mathcal{C}_l=\mathcal{C}_l\cup{k_n}$.
\STATE Calculate $p_l=\vert h_{k_n}\vert^2P_{k,l}$ and its corresponding $e_l$.
\IF{$e_l<\text{tmp}$}
\STATE Set $\mathcal{K}_l=\mathcal{C}_l$ and $\text{tmp}=e_l$.
\ENDIF
\ENDFOR
\end{algorithmic}
\end{algorithm}


\begin{remark}
    We note that both the latency and detection complexity of the proposed scheme do not scale with the number of devices, $K$. Moreover, the complexity of Algorithm~\ref{alg0} grows only mildly with $K$. Hence, the proposed scheme is naturally compatible with massive access devices  \cite{MD}.
\end{remark}
\vspace{-0.3cm}
\subsection{Extension to MIMO Scenarios}
 We further extend to a MIMO system, where each device has $N_t$ transmit antennas and the AP has $N_r$ receive antennas. The $k$-th device employs a normalized beamformer $\mathbf{f}_{k,l}\!\in\!\mathbb{C}^{N_t\times1}$, while the AP adopts a linear receive beamformer  $\mathbf{w}\!\in\!\mathbb{C}^{N_r\times1}$ 
on subcarrier $l$. The received signal is rewritten as
\begin{align}
y_l
=\sum_{k=1}^{K} \mathbf{w}^{H}\mathbf{H}_{k,l}\mathbf{f}_{k,l}\rho_{k,l}t_{k,l}
+\mathbf{w}^{H} \mathbf{n}_l,
\end{align}
where $\mathbf{H}_{k,l}$ denotes the  channel between the AP and the $k$-th device on subcarrier $l$.  After beamforming, the MIMO link can be equivalently represented by  effective scalar channels  $h_{k,l}\triangleq \mathbf{w}^{\mathit{H}}\mathbf{H}_{k,l}\mathbf{f}_{k,l}$, which preserves the SISO structure and thus allows all proposed operations to remain directly applicable. Besides, existing beamforming optimization methods \cite{optbf} can be employed to fully exploit array gain.
\vspace{-0.3cm}
\section{Numerical Results}

In this section, we present simulation results. We assume the same transmit power budget $P_{\max}$ of each device and the same noise power across subcarriers, i.e., $\sigma_l^2=\sigma^2,\forall l$. The SNR is defined as $\frac{P_{\max}}{\sigma^2L}$. Moreover, a multi-path Rayleigh fading model is adopted \cite{ofdmchannel}, where each subcarrier experiences independent small-scale fading. The channel of device $k$ on subcarrier $l$ is modeled as $h_{k,l} = \sum_{m=1}^{M} h_{l,m}^k \, e^{j \frac{2\pi \tau_m^k l}{L}}$,
where $h_{l,m}^k$ is the complex gain of the $m$-th path, $\tau_m^k$ is its relative delay, and $M$ is the number of multi-path taps. Unless otherwise specified, the other parameters are set as follows: the number of users, $K = 20$, the number of quantization bits, $b = 8$, and the number of subcarriers, $L=8$. For performance comparison, we consider the following baselines. 
\begin{itemize}
    \item \textbf{Analog} \cite{gxzhu}: The typical analog scheme adopts truncated channel inversion with fixed threshold ($\gamma$), and it repeats the computation on different subcarriers for averaging.
    \item \textbf{Binary+ML} \cite{binary}: Using the binary representation for coding, together with BPSK modulation, truncated channel inversion, and ML detection.
    \item \textbf{Balance} \cite{ref10}: Using balanced number system for coding.
    \item \textbf{Bit-slicing} \cite{bitslicing}: Using bit slicing  for coding and employing per-subcarrier maximum a posterior (MAP) detection.
\end{itemize}

\begin{figure}[!t]
    \centering
    \subfigure[Uniformly distributed source]{ \includegraphics[width=0.93\linewidth]{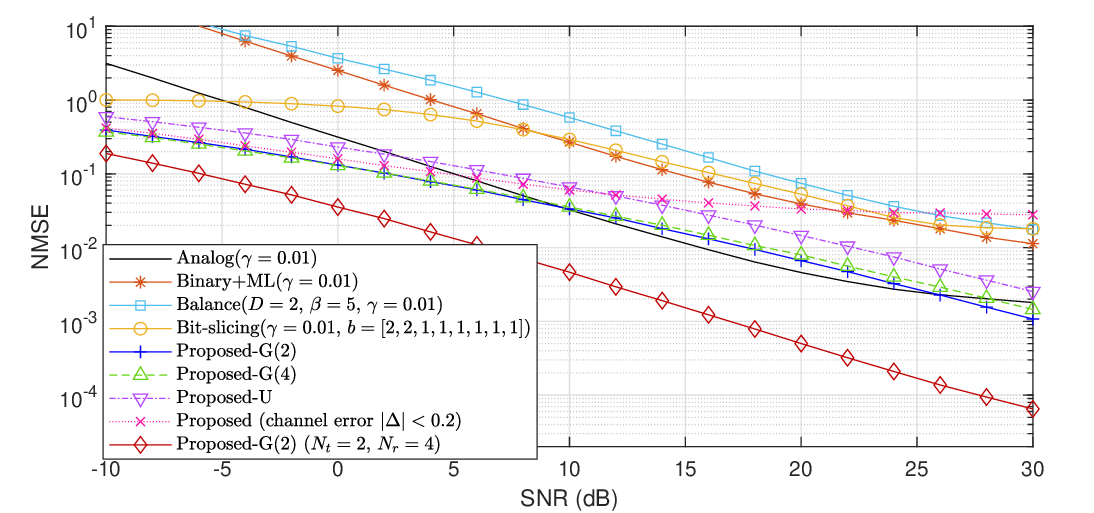}}
    \subfigure[Gaussian source]{	\includegraphics[width=0.93\linewidth]{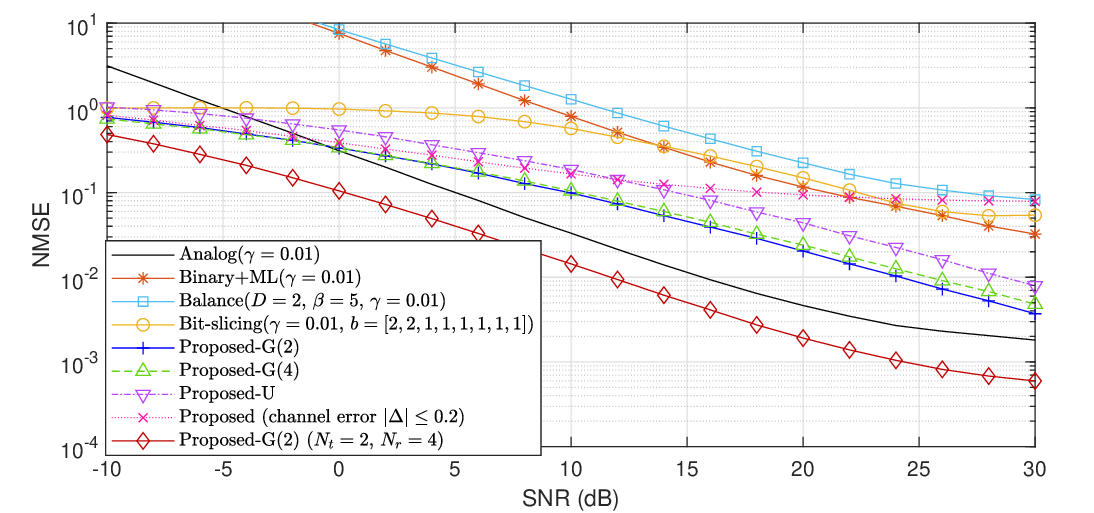}}
    \caption{\centering {NMSE versus SNR levels.}} \label{fig2}
    \vspace{-0.5cm}
\end{figure}

In Fig. \ref{fig2}, we evaluate the normalized MSE (NMSE) versus the SNR levels with uniformly/Gaussian distributed source data. The legend ``Proposed-U" represents the proposed scheme with uniform power allocation while ``Proposed-G" denotes the proposed scheme employing the geometric power allocation strategy. At low SNRs, the proposed scheme achieves clear gains over analog baselines. In some cases, however, it becomes slightly inferior due to quantization errors. This effect is particularly pronounced for Gaussian sources, whose unbounded nature results in significant quantization errors. Compared with digital baselines, the proposed scheme consistently performs better. Since the conventional binary-representation and Bit-slicing schemes offer consistent coding efficiency, the performance gain mainly arises from the subsequent transceiver and resource allocation design. Compared with the Balance scheme, the proposed approach attains higher coding efficiency and thus yields significant improvements.

Moreover, the proposed power allocation strategy further boosts performance by allocating more resources to critical subcarriers, with a larger $\varpi$ required for more unbalanced allocation under low SNR. {Besides, we evaluate performance in MIMO settings, confirming that the proposed scheme extends naturally to such scenarios and that the array gain provided by multiple antennas further improves computational accuracy.}
{Finally, we evaluate the performance under channel estimation errors ($|\Delta|<0.2$) and observe that such errors have negligible impact at low SNRs, demonstrating the robustness of the proposed method in low-SNR regimes.}

\section{Conclusion}
In this paper, we proposed a novel complement coded digital AirComp scheme that exploited the property of two’s complement to achieve error-free computation. Moreover, we optimized pre-processing, power allocation and detection by using the unified distribution of the coded outputs and the intrinsic importance differences among bits. Furthermore, {developing nonlinear detectors such as message-passing for reducing MSE, as well as advanced channel coding methods for error correction, are promising directions in the future.}

\appendices
\section{Proof of Theorem \ref{theo1}}\label{appa}
According to \cite[Eq. (1)]{code}, the decimal integer $\zeta \bar{s}_k$ and its two's complement representation $\{x_{k,b}, \ldots, x_{k,1}\}$ satisfies $\zeta \bar{s}_k = \sum_{l=1}^{L-1}x_{k,l}2^{l-1}-x_{k,L}2^{L-1}$. Then, we verify that 
    \begin{align}
        \sum_{k=1}^K \bar{s}_k &=\frac{1}{\zeta} \sum_{k=1}^K \left(\sum_{l=1}^{L-1}x_{k,l}2^{l-1}-x_{k,L}2^{L-1}\right)\nonumber\\
        &=\frac{1}{\zeta}\left( \sum_{l=1}^{L-1}\left(\sum_{k=1}^K x_{k,l}\right)2^{l-1}-\left(\sum_{k=1}^K x_{k,L}\right)2^{L-1}\right)\nonumber \\
        &=f_{\text{dec}}(r_L,\!\cdots\!,r_2,r_1).
    \end{align}
{We next show that any scheme with code length $L<b$ cannot satisfy constraint (\ref{constraint}). Since $\bar{s}_k$ takes $2^b$ values while $\mathbf{x}_k$ has $2^L$ codewords, $L<b$ implies that the number of codewords is smaller than the number of source values. Consequently, at least two distinct $\bar{s}_k$ values are mapped to the same codeword, making it impossible for the receiver to distinguish them and achieve error-free computation. Hence, the minimum code length is $L=b$, which is attained by the proposed scheme.}
    
\section{Proof of Proposition \ref{pro1}}\label{appb}
Utilizing \emph{Property 1}, we obtain that $r_l =\sum_{k=1}^K x_{k,l}$ follows Bernoulli distribution $\mathcal{B}(K,0.5)$. Moreover, with the pre-processing in (\ref{eq7}), we express the received signal $y_l$ as 
\begin{align}
    y_l =2\sqrt{p_l} \sum_{k \in \mathcal{K}_l} x_{k,l} - \sqrt{p_l} |\mathcal{K}_l| + n_l,
\end{align}
and we define $\tilde{r}_l\triangleq \sum_{k\in \mathcal{K}_l} x_{k,l} \sim  \mathcal{B}(\vert \mathcal{K}_l\vert,0.5)$. With the detector in (\ref{eq8}), we calculate the MSE on subcarrier $l$ as 
\begin{align} \label{ae1}
e_l \!\!&=\!\! \mathbb{E}[(r_l\!\! -\! \hat{r}_l)^2]\!\!=\! \!\mathbb{E}\!\!\left[\!\left(r_l \!-\!\! 2\!\sqrt{p_l} \lambda_l \tilde{r}_l \!+\!\! \sqrt{p_l} \lambda_l |\mathcal{K}_l| \!-\! \!\lambda_l \Re\{n_l\}\! -\!\! \mu_l\!\right)^2\!\right]\nonumber \\
&\overset{\text{(a)}}{=} \mathbb{E}[r_l^2] 
- 4\sqrt{p_l} \lambda_l \mathbb{E}[r_l \tilde{r}_l] 
+ 2\sqrt{p_l} \lambda_l |\mathcal{K}_l| \mathbb{E}[r_l] 
- 2\mu_l \mathbb{E}[r_l] \notag \\
&\quad + 4p_l \lambda_l^2 \mathbb{E}[\tilde{r}_l^2] 
- 4p_l \lambda_l^2 |\mathcal{K}_l| \mathbb{E}[\tilde{r}_l] 
+ 4\sqrt{p_l} \lambda_l \mu_l \mathbb{E}[\tilde{r}_l] \notag \\
&\quad + p_l \lambda_l^2 |\mathcal{K}_l|^2 
- 2\sqrt{p_l} \lambda_l |\mathcal{K}_l| \mu_l 
+ \frac{1}{2} \sigma_l^2 \lambda_l^2 + \mu_l^2,
\end{align}
where the equality in (a) comes from the fact that $\Re\{n_l\}\sim \mathcal{N}\left(0,\frac{\sigma_l^2}{2}\right)$. Then, the expectations in (\ref{ae1}) are calculated as
\begin{align} \label{ae2}
    &\mathbb{E}\left[r_l \right]=\frac{K}{2}, \enspace \mathbb{E}\left[r_l^2 \right]=\frac{K^2+K}{4},\mathbb{E}\left[\tilde{r}_l \right]=\frac{|\mathcal{K}_l|}{2},\nonumber \\
    &\mathbb{E}\left[\tilde{r}_l^2 \right]=\frac{|\mathcal{K}_l|^2+|\mathcal{K}_l|}{4},\mathbb{E}\left[r_l \tilde{r}_l\right]\!=\!\frac{|\mathcal{K}_l|(K+1)}{4}.
\end{align}
Substituting (\ref{ae2}) into (\ref{ae1}), we have  
\begin{align}
    e_l\!=\!&\left(\!p_l |\mathcal{K}_l| \!+\! \frac{1}{2} \sigma_l^2\!\right) \lambda_l^2 \!
    \!-\! \!\sqrt{p_l} |\mathcal{K}_l| \lambda_l 
    \!-\!\! K \mu_l \!+\! \mu_l^2 \!+\! \!\frac{K^2 \!\!\!+\! K}{4}.
\end{align}
Hence, by setting the derivatives w.r.t. $\lambda_l$ and $\mu_l$ to 0, we~have $\lambda_l=  \frac{\sqrt{p_l} |\mathcal{K}_l|}{2p_l |\mathcal{K}_l| + \sigma_l^2}$ and $\mu_l=\frac{K}{2}$, which completes the proof.

\section{Proof of Theorem \ref{theo2}} \label{appc}
The solution $\mathcal{K}_l$ obtained by Algorithm \ref{alg0} satisfies the condition that for all $i\in \mathcal{K}_l$ and $j\notin\mathcal{K}_l$, it holds that $|h_{i,l}|\geq |h_{j,l}|$. Next, we need to demonstrate that the optimal $\mathcal{K}_l$ also adheres to this condition and that Algorithm~\ref{alg0} exhaustively explores all such sets. To begin with, we prove the first statement by contradiction. Suppose that in the optimal $\mathcal{K}_l$, device $i$ has the minimum channel gain, and there exists $j\notin \mathcal{K}_l$ with $|h_{j,l}|>|h_{i,l}|$. Replacing device $i$ with $j$ yields a new set $\mathcal{K}_l^\prime$ with $|\mathcal{K}_l|=|\mathcal{K}_l^\prime|$ and the associated optimal scaling factor is  $p_l^\prime=\min_{k\in \mathcal{K}_i^\prime}\{ |h_{k,l}|^2 \}> |h_{i,l}|$, thus enabling a larger scaling factor and a lower MSE. This contradicts the optimality of the original $\mathcal{K}_l$, confirming the first claim. Then, Algorithm \ref{alg0} constructs candidate sets in descending order of~channel gain, thereby enumerating all sets that satisfy the required condition. Together, these results complete the proof.

\end{document}